\begin{document}

    \title{\Large Shared-Memory Hierarchical Process Mapping}
    \author{Christian Schulz\thanks{Heidelberg University, Germany, christian.schulz@informatik.uni-heidelberg.de}
    \and Henning Woydt\thanks{Heidelberg University, Germany, henning.woydt@informatik.uni-heidelberg.de, Funded by the Deutsche Forschungsgemeinschaft (DFG, German Research Foundation) -- DFG SCHU 2567/6-1}}

    \date{}

    \maketitle

    \fancyfoot[R]{\scriptsize{Copyright \textcopyright\ 2025 by SIAM\\
    Unauthorized reproduction of this article is prohibited}}

\fancyfoot[R]{\scriptsize{Copyright \textcopyright\ 2025\\
Copyright for this paper is retained by authors}}

    \begin{abstract}
        \small\baselineskip=9pt
        Modern large-scale scientific applications consist of thousands to millions of individual tasks.
        These tasks involve not only computation but also communication with one another.
        Typically, the communication pattern between tasks is sparse and can be determined in advance.
        Such applications are executed on supercomputers, which are often organized in a hierarchical hardware topology, consisting of islands, racks, nodes, and processors, where processing elements reside.
        To ensure efficient workload distribution, tasks must be allocated to processing elements in a way that ensures balanced utilization.
        However, this approach optimizes only the workload, not the communication cost of the application.
        It is straightforward to see that placing groups of tasks that frequently exchange large amounts of data on processing elements located near each other is beneficial.
        The problem of mapping tasks to processing elements considering optimization goals is called process mapping.
        In this work, we focus on minimizing communication cost while evenly distributing work.
        We present the first shared-memory algorithm that utilizes hierarchical multisection to partition the communication model across processing elements.
        Our parallel approach achieves the best solution on 95 percent of instances while also being marginally faster than the next best algorithm.
        Even in a serial setting, it delivers the best solution quality while also outperforming previous serial algorithms in speed.
    \end{abstract}

    \section{Introduction}\label{sec:introduction}
    The performance of applications on modern distributed parallel systems depends on several factors, one of the most critical being the communication between processing elements (PEs).
It is straightforward to see that placing tasks that frequently exchange large amounts of data on PEs that are physically close to one another is far more efficient than placing them on distant PEs\@.
On modern supercomputers, the distance between PEs is captured by the hardware topology and the corresponding communication links.
The hardware topology is typically organized in a hierarchy that includes islands, racks, nodes, and processors.
Consequently, the speed of communication tends to degrade as the distance between PEs increases within the hierarchy.
Often, both the communication pattern between application tasks and the underlying hardware topology are known in advance.
By optimizing the placement of tasks onto the PEs, significant performance improvements can be achieved.
A mapping from $n$ tasks to $k$ PEs is desired that minimizes the cost of communication while also balancing the workload of the tasks onto the PEs.
This optimization problem is known as the \hbox{\emph{general process mapping problem} (GPMP)}.

Sahni and Gonzalez~\cite{Sahni76} showed that solving the underlying \hbox{\emph{quadratic assignment problem} (QAP)} is \textsc{NP}-hard and that no constant factor approximation exists unless $\textsc{P} = \textsc{NP}$.
This inherent difficulty is also evident in practice, as no meaningful instances with $n > 20$ can be solved optimally~\cite{Burkard98}.
Consequently, mapping instances with thousands or millions of tasks can only be tackled through heuristic algorithms.

In this work, we make two assumptions that are generally valid for modern supercomputers and the applications that run on them: (A) the application’s communication pattern is sparse, and (B) the hardware communication topology is hierarchical, with uniform communication speed at the same level of the hierarchy.
Assumption (A) stems from the communication pattern of large-scale scientific applications, see e.g.,~\cite{Catalyurek96, Fietz12, Schloegel00}.
To efficiently distribute the workload and minimize communication cost, \emph{graph partitioning} (GP) is typically deployed, resulting in a sparse communication pattern.
Assumption (B) is typically satisfied by modern supercomputers, which are often built with homogeneous architectures.
Such architectures offer several advantages, e.g.,\ simplicity, scalability and performance consistency.

The two main approaches to solving GPMP are \hbox{(i) integrated-mapping} and \hbox{(ii) the two-phase} approach.
The first approach integrates the mapping directly into the multilevel graph partitioning process, see~\cite{Faraj20, Pellegrini96, Walshaw01}.
Here, the objective function, typically the number of cut edges, is replaced by a function that accounts for processor distances.

The second approach decouples partitioning and mapping into two distinct phases.
In the first phase, a balanced \hbox{$k$-way partition} that minimizes the communication via edge-cut is computed.
In the second phase, each block is mapped to one~PE of the processor network, such that the total communication cost is minimized.
A special variant of this approach is called \emph{hierarchical multisection}~\cite{Schulz19}, which partitions the graph along the systems' hierarchy.
First, the communication graph is partitioned onto the islands, yielding a balanced partition with small edge-cut, i.e., few communications across the islands.
Next, the subgraph of each island is partitioned across the racks within the island, again yielding a small edge-cut partition and therefore few communications across the racks.
This procedure is repeated for each layer in the topology.
In the end, a balanced $k$-way partition is obtained, allowing the trivial identity mapping to solve the mapping phase.
The approach was first presented in~\cite{Schulz19} and experimental evaluations reported in~\cite{Faraj20, Heuer23, Schulz19} show that it produces high-quality mappings.

\textbf{Our Contribution.}
We present \textsc{SharedMap}\footnote{\url{https://github.com/HenningWoydt/SharedMap}}, the first parallel shared-memory hierarchical multisection algorithm to tackle GPMP using the two-phase approach.
Partitioning along the system's hierarchy naturally divides the problem into independent subproblems, providing an intuitive opportunity for parallelization.
Our parallel approach achieves the best solution quality on $95\%$ of instances while being marginally faster than the next best parallel solver.
Furthermore, our serial version achieves better solution quality while also outperforming the previous best serial algorithms in terms of speed.

    \section{Preliminaries}\label{sec:preliminaries}
    \subsection{Concepts and Notation.}\label{subsec:concepts}
Let $G=(V, E)$ be an undirected graph, $c: V \to \mathbb{R}_{\geq 0}$ the vertex weights, and $\omega: E \to \mathbb{R}_{\geq 0}$ the edge weights.
The natural extension of both functions to sets is $c(V') = \sum_{v \in V'} c(v)$ for any $V' \subseteq V$ and $\omega(E') = \sum_{e \in E'} \omega(e)$ for any $E' \subseteq E$.

The \emph{graph partitioning problem} (GPP) asks to partition the vertex set $V$ of a graph into $k$ distinct blocks, usually under a \emph{balancing} constraint.
Formally, GPP partitions $V = V_1 \cup V_2 \cup \ldots \cup V_k$, such that $\bigcup_{i}V_i = V$ and $\forall i\neq j: V_i \cap V_j = \emptyset$, which is called a \hbox{\emph{$k$-way partition}} of $G$.
The balancing constraint is controlled by a hyperparameter $\epsilon \in \mathbb{R}$, called the \emph{imbalance}.
The sum of vertex weights in each partition may not exceed $L_{\max} \coloneqq (1 + \epsilon)\frac{c(V)}{k}$, that is $c(V_i) \leq L_{\max}$ for all~$i$.
The \emph{edge-cut} of a $k$-partition is defined as the total weight of edges that cross between different blocks, i.e., $\sum_{i < j}\omega(E_{ij})$ with $E_{ij} = \{\{u, v\} \mid u \in V_i, v \in V_j\}$.

Let $n$ denote the number of tasks and $k$ the number of PEs.
The communication matrix of the tasks is denoted by $\mathcal{C} \in \mathbb{R}^{n \times n}$ and the hardware topology matrix is denoted by $\mathcal{D} \in \mathbb{R}^{k \times k}$.
An entry $\mathcal{C}_{ij}$ represents the amount of communication between tasks $i$ and $j$ and an entry $\mathcal{D}_{xy}$ denotes the communication factor between PEs $x$ and $y$.
The communication cost is therefore $\mathcal{C}_{ij}\mathcal{D}_{xy}$ if tasks $i$ and $j$ are assigned to PE's~$x$ and $y$ respectively.
Both $\mathcal{C}$ and $\mathcal{D}$ are assumed to be symmetric, otherwise they can be modeled symmetrically~\cite{Brandfass12}.
Throughout this work, we will consider the communication graph $G_{\mathcal{C}}$ instead of $\mathcal{C}$.
The communication graph contains a forward and backward edge with weight $C_{ij}$ between vertices~\hbox{$i$ and $j$} for each non-zero entry $C_{ij}$.
This representation is more beneficial as communication matrices are typically sparse.

In \emph{hierarchical process mapping} the topology of the supercomputer is described by a homogeneous~\hbox{hierarchy $H = a_1 : a_2 : \ldots : a_\ell$}.
This hierarchy specifies that each processor contains $a_1$ PEs, each node contains $a_2$ processors, each rack contains $a_3$ nodes, and so on.
The total number of PEs is $k = \prod_{i=1}^{\ell}a_i$.
Additionally, the sequence $D = d_1 : d_2 : \ldots : d_\ell$ describes the communication cost between the different PEs.
Two PEs on the same processor have distance~$d_1$, two PEs on the same node but on different processor have distance~$d_2$, two PEs in the same rack but on different nodes have distance~$d_3$, and so forth.

The focus of this work is to solve the \emph{general process mapping problem}.
It asks to assign each vertex of the communication graph $G_\mathcal{C}$ to exactly one PE of the communication topology, such that the total communication cost is minimized while satisfying the balancing constraint.
Formally, the objective is to find a mapping $\Pi : [n] \to [k]$ that minimizes \hbox{$J(C, D, \Pi) \coloneqq \sum_{i,j}\mathcal{C}_{ij}\mathcal{D}_{\Pi(i)\Pi(j)}$}.
In GPMP, it is typically assumed that $n > k$.
If $n = k$, the problem is known as the \emph{one-to-one process mapping problem} (OPMP), which is equivalent to QAP\@.
The problems GPP, GPMP and QAP are all \textsc{NP}-hard problems~\cite{Faraj20, Garey76, Sahni76} and no constant factor approximation guarantee exists unless \textsc{P = NP}.

\subsection{Graph Partitioning.}\label{subsec:graph-partitioning}
Since graph partitioning is a crucial component of our approach, we provide a brief overview of the approach and the libraries we employ.
Modern graph partitioning typically relies on the \emph{multilevel} approach.
This approach iteratively coarsens the graph into smaller and smaller graphs while preserving its overall structure.
Once the graph is sufficiently small, a high-quality, albeit potentially expensive, partitioning algorithm is used.
The partition is projected back onto the larger graphs in reverse order, and local search algorithms are employed to further improve the objective function.

The two main methods to coarsen a graph are edge contraction~\cite{Karypis95} and clustering~\cite{Meyerhenke16}.
Contracting an edge~$\{u, v\}$ is achieved by replacing $u$ and $v$ with a new vertex $w$ with weight $c(w) = c(u) + c(v)$.
The former neighbors of $u$ and $v$ are connected to $w$ and parallel edges $\{u, x\}$ and $\{v, x\}$ are replaced by the edge~$\{w, x\}$ with weight \hbox{$\omega(\{w, x\}) = \omega(\{u, x\}) + \omega(\{v, x\})$}.
A common strategy involves determining a matching on the graph and applying edge contraction on all edges in the matching.
Coarsening through clustering works by first determining a set of vertices as clusters and then contracting the neighborhood of each cluster.
The weight of the cluster vertex is the summed weights of the vertices contained in the cluster, and parallel edges are handled in the same way as in edge contraction.

Coarsening is applied iteratively until a graph has fewer vertices than a pre-determined threshold.
For this smallest graph, a balanced $k$-way partition optimizing edge-cut is computed using either recursive bisection or a direct $k$-way partitioning algorithm.
Since the graph is small, more computationally expensive algorithms can be employed to achieve high-quality partitions, e.g.,\ small edge-cut partitions.

The graph is then uncontracted, and local search is applied to further improve the edge-cut.
Local search methods consist of finding vertex moves or series of moves to other blocks that improve the edge-cut without violating the balance constraint.
The most widely used local search methods include the \hbox{Fiduccia-Mattheyses (FM)} algorithm~\cite{Fiduccia82}, \hbox{$k$-way FM~\cite{Sanders11}} and \hbox{Flow-Based Refinement~\cite{Sanders11}}.
Uncontraction and refinement are repeated until the original graph is fully restored.

In this work, we use two libraries for graph partitioning.
For the serial case we use \textsc{KaFFPa}~\cite{Sanders11} from the \textsc{KaHIP} library, and for parallel shared-memory partitioning we use \textsc{Mt-KaHyPar}~\cite{Gottesbueren20}.
\textsc{KaFFPa} is, in general, considered to be one of the best partitioners due to its high quality partitions~\cite{Schlag21}.
\textsc{Mt-KaHyPar} is specialized for hypergraph-partitioning instead of graphs, however, it can also handle those.
Experimental evaluation in~\cite{Gottesbueren20} shows that its quality for graph partitioning is comparable to \textsc{Mt-KaHIP}~\cite{Akhremtsev18}.
Both \textsc{KaFFPa} and \hbox{\textsc{Mt-KaHyPar}} can solve GPMP, which makes them interesting algorithms to compare to in the experiment section~\ref{sec:experiments}.
Consequently, we choose these as serial and parallel graph partitioners.
Section~\ref{sec:related-work} describes both approaches in more detail.

Swapping the libraries with potentially faster and/or stronger alternatives will directly translate to a faster and stronger algorithm for our implementation.
However, both libraries already provide high-quality solutions, meaning any additional gains would likely be marginal.
Other notable serial graph partitioners include \textsc{Jostle}~\cite{Walshaw00}, \textsc{Metis}~\cite{Karypis98}, \textsc{Scotch}~\cite{Pellegrini96} and \textsc{KaHyPar}~\cite{Schlag21}.
For shared-memory graph partitioning, prominent options include \textsc{Mt-KaHIP}~\cite{Akhremtsev18}, \textsc{Mt-Metis}~\cite{Lasalle13}, \textsc{KaMinPar}~\cite{Gottesburen21} and \textsc{PuLP}~\cite{Slota14}.

    \section{Related Work}\label{sec:related-work}
    Process mapping is closely related to graph partitioning, a field that has seen a tremendous amount of research.
Section~\ref{subsec:graph-partitioning} already gave a brief introduction, and we refer the reader to~\cite{bichot13} and~\cite{Buluc16} for more information.

GPMP has likewise seen large amounts of research, and we refer the reader to~\cite{Hoefler13} and~\cite{Pellegrini13} for a more detailed overview.
Hatazaki~\cite{Hatazaki98} was among the first to use graph partitioning to map MPI processes onto hardware topologies.
Träff~\cite{Traff02} implemented a non-trivial MPI mapping for the NEC SX-series of parallel vector computers and Yu et al.~\cite{Yu06} implemented graph embedding for the Blue Gene/L Supercomputer.

OPMP, the second phase of the two-phase approach, likewise has seen a lot of research.
Müller-Merbach~\cite{MullerMerbach70} presented a greedy algorithm to obtain an initial mapping for OPMP\@, that serves as a basis for many subsequently developed heuristics.
An improvement that maintains the same asymptotic complexities is offered by Glantz et al.~\cite{Glantz15}.
Heider~\cite{Heider72} introduced a refinement step that iteratively considers all $\mathcal{O}(n^2)$ possible swaps in the mapping.
Brandfass et al.~\cite{Brandfass12} improve the method by avoiding redundant swaps and also propose a method to split the mapping into multiple blocks and only perform swaps inside each block.
The method is further refined by Schulz and Träff~\cite{Schulz17} by using more efficient data structures to compute the gains of swaps and further restricting the search space.

While optimizing $J(C, D, \Pi)$ seems straightforward from a theoretical perspective, it raises the question of whether optimizing this metric leads to improvement in practice.
Brandfass et al.~\cite{Brandfass12} have shown that optimizing $J(C, D, \Pi)$ in the context of Computational Fluid Dynamics (CFD) leads to a practical improvement.
Alternative objective functions also have been considered.
For example, Hoefler and Snir~\cite{Hoefler11} minimize the maximum network congestion.

Next, we present five state-of-the-art algorithms in more detail, as we use them as benchmarking algorithms in Section~\ref{sec:experiments}.
The first three are serial algorithms, the fourth is a shared-memory algorithm, and the last algorithm is a distributed-memory algorithm.

\textsc{KaFFPa-Map}~\cite{Schulz17} solves GPMP using the two-phase approach.
In the first phase $G_\mathcal{C}$ is partitioned into $k$ blocks using recursive bisection, resulting in a communication model graph $G_{M}$.
This new graph contains exactly $k$ vertices and an edge between vertices exist if the $k$-way partition has edges between the corresponding blocks, i.e., $G_{M}$ is the quotient graph.
Edge weights of $G_{M}$ correspond to the summed edge weights between the blocks.
In the second phase, the systems hierarchy~$H$ is used to first create a perfectly balanced \hbox{$a_{\ell}$-way} partition of $G_{M}$, then each of the subgraphs is perfectly partitioned into $a_{\ell - 1}$ blocks and so forth.
The last step will end in $a_1$ partitions, and the mapping along the partitioning is used to map the $k$ blocks onto the $k$ PEs.
To refine the mapping, the local search of~\cite{Brandfass12} is further improved and optimized.
The search consists of swapping the assignment of two processes and evaluating if the objective function is improved.
While~\cite{Brandfass12} allowed swapping of all processes,~\cite{Schulz17} only allows swapping of processes if their distance in $G_{\mathcal{C}}$ does not exceed a threshold $d$.
They showed that $d=10$ achieves a good trade-off between solution quality and runtime.

\textsc{Global Multisection}~\cite{Schulz19} works very similarly to \textsc{KaFFPa-Map}.
Instead of creating the communication model graph $G_{M}$ and mapping it, the communication graph $G_{\mathcal{C}}$ is directly partitioned along the \hbox{hierarchy~$H$}.
The resulting blocks are mapped to the PEs using the identity mapping.
The local search of \textsc{KaFFPa-Map} is applied to refine the mapping.

\textsc{Integrated Mapping}~\cite{Faraj20} abandons the two-phase approach and integrates the mapping into the partitioning of $G_{\mathcal{C}}$.
Rather than optimizing the edge-cut, the communication cost~$J(C, D, \Pi)$ is used as the objective to minimize.
The partitioning follows the multilevel scheme, coarsening using edge contraction, computing an initial solution and uncontracting using local refinement.
To compute the initial solution, the hierarchical multisection approach is applied on the coarsest graph.
Local refinement is performed using a combination of techniques, including quotient-graph refinement, \hbox{$k$-way FM}, label propagation, and multi-try FM\@.
Additionally, four methods were developed to query the distance between two PEs (a frequently needed operation), offering trade-offs between runtime and memory consumption.
Delta-Gain updates are introduced that can, in some cases, reduce the cost of recomputing gains of vertex moves.

\textsc{Mt-KaHyPar-Steiner}~\cite{Heuer23} provides a parallel shared-memory algorithm to solve GPMP by optimizing the Steiner Tree Metric.
Originally,~\cite{Heuer23} deals with minimizing the wire length in VLSI designs.
The goal is to map the logical units of a circuit onto a physical layout (the chip), such that the length of necessary wire is minimized.
The logical units and their connections can be represented by a hypergraph $\mathcal{H}$ while the physical layout can be represented by an edge-weighted graph $\mathcal{C}$.
The edge weights represent the distance of two places on the chip.
The goal is to place the logical units (vertices of $\mathcal{H}$) onto the places of the chip (vertices of $\mathcal{C}$) such that the total wire length spanned by the hyperedges of $\mathcal{H}$ on $\mathcal{C}$ is minimized.
By substituting $\mathcal{H}$ with the communication (hyper-)graph $G_{\mathcal{C}}$ and $\mathcal{C}$ by a complete graph that models the topology matrix~$\mathcal{D}$, the same method can be used to solve GPMP\@.

The method follows the multilevel partitioning scheme for hypergraphs, i.e., the hypergraph is coarsened to a smaller size, an initial mapping is determined and uncontractions with local refinements are performed.
Coarsening is achieved via clustering, and initial solutions are computed by obtaining a balanced $k$-way partition and constructing a mapping onto $C$ in a greedy manner.
Refinement techniques include label propagation, FM-local search and flow-based search.

\textsc{ParHIPMap}~\cite{Predari21} introduced a distributed-memory algorithm to solve GPMP via MPI\@.
It leverages the distributed-memory partitioner \textsc{ParHIP}~\cite{Meyerhenke17} by integrating the mapping into the partitioning, i.e., instead of minimizing edge-cut the communication cost is minimized.
\textsc{ParHIP} also follows the multilevel scheme for partitioning, but each phase is modified due to the distributed approach.
Initially, the graph is distributed across the PEs using block partitioning.
Each PE coarsens its subgraph, the resulting coarsened subgraphs are collected by all PEs, and each PE calculates a mapping on the coarsest graph.
The best mapping is broadcast to all PEs, after which uncoarsening and refinement is performed.
We refer the reader to~\cite{Meyerhenke17} and~\cite{Sanders12} for more details on distributed graph partitioning.
To query the distance of PEs, a bit-label technique is employed, which encodes the ancestors of a PE in one machine word.
This enables them to query the distance of PEs in $\mathcal{O}(1)$ (if hardware instructions are available), but of course also restricts the bit label size and therefore the size of $k$.
\textsc{ParHIPMap} utilizes parallel label propagation as a refinement process.
They additionally avoid high memory usage during block partitioning by removing ``Halo Hubs'', vertices with high degree and edges that span across multiple PEs.

    \section{Parallel Hierarchical Multisection}\label{sec:parallel-hierarchical-multisection}
    Hierarchical multisection, first introduced in~\cite{Schulz19}, exploits the given hierarchy $H=a_1:a_2:\ldots:a_{\ell}$ of the supercomputer.
Instead of arbitrarily partitioning the communication graph into a balanced $k$-way partition, the graph is partitioned according to the hierarchy.
The communication graph $G_{\mathcal{C}}$ is first partitioned into $a_{\ell}$ blocks.
Each of these blocks is then partitioned further into $a_{\ell - 1}$ blocks and so forth until $k$ blocks are obtained.
The following mapping phase can then be solved trivially by the identity mapping.
Experimental evaluations in ~\cite{Faraj20, Heuer23} and~\cite{Schulz19} show that the approach generates high-quality mappings.
Fig.~\ref{fig:multisection} shows an example.

Besides minimizing communication cost, an \hbox{$\epsilon$-balanced} $k$-way partition is desired.
To achieve such a partition, it is necessary to adaptively rescale $\epsilon$ for each partitioning of a subgraph.
This rescaling is based on the subgraph's weight, the remaining partitions, and its position within the hierarchy.
In the following sections, we denote the adaptive imbalance by $\epsilon'$.
Section~\ref{sec:balanced-partition} provides further detail on the necessity.

\begin{figure*}
    \centering
    \input{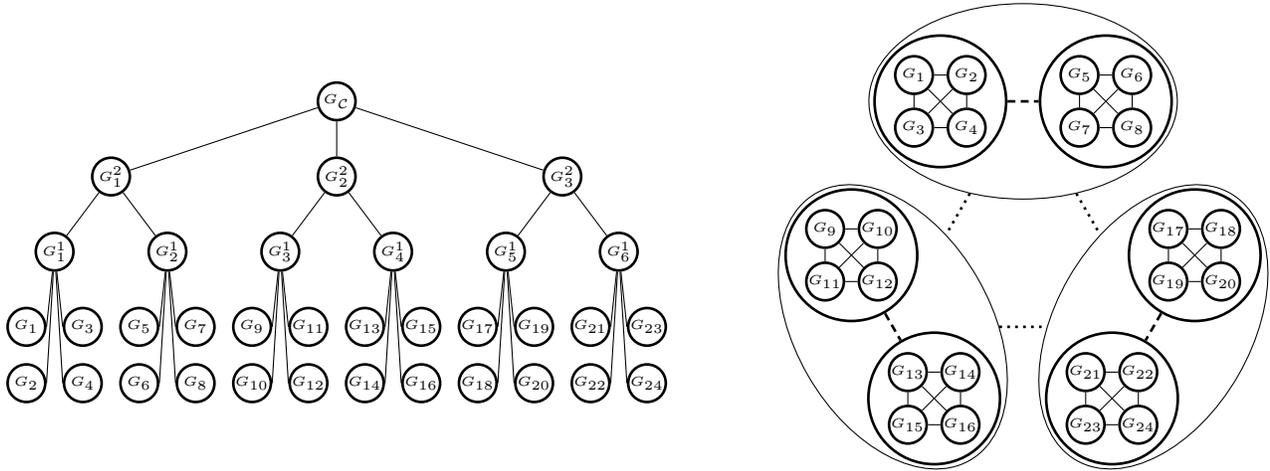}
    \caption{The hierarchical multisection approach with hierarchy $H=4:2:3$ and $D=1:10:100$.
    On the left-hand side, the partitioning of $G_{\mathcal{C}}$ into $k = 4\cdot 2 \cdot 3 = 24$ blocks is shown.
    First $G_{\mathcal{C}}$ is partitioned into three blocks~\hbox{($G_{1}^2$, $G_{2}^2$, $G_{3}^2$)}, each of the blocks is further partitioned into two blocks ($G_{1}^1$ to $G_{6}^1$), and finally, each of these is partitioned into four blocks ($G_1$ to $G_{24}$).
    On the right side, the resulting partitioning and the corresponding communication graph are depicted.
    Solid lines indicate a communication factor of 1 between communicating tasks, dashed lines indicate a factor of 10, and the dotted lines indicate a factor of 100.
    For example, if a task in $G_1$ communicates with a task in $G_4$, the cost is scaled by 1.
    If it communicates with a task in $G_6$ the cost is scaled by 10 and if it communicates with a task in $G_{14}$ the cost is scaled by 100.
    }
    \label{fig:multisection}
\end{figure*}

\subsection{Parallel Model.}\label{subsec:parallel-model}
In our parallel model, we assume the availability of $p$ threads that share memory, so no explicit communication is required.
Ideally, a parallelized algorithm would achieve a linear speedup, meaning that using $p$ threads would reduce the runtime to approximately~$1/p$ of the serial runtime.
However, achieving such an ideal speedup is generally not possible due to several factors, one of the most significant being thread idleness.
For an efficient approach, it is (in most cases) crucial that as few threads as possible are idle at all time.

The hierarchical multisection approach naturally offers itself up for parallelization.
Each subgraph is partitioned into multiple blocks, which can then be partitioned independently.
Initially, ~$G_\mathcal{C}$ is partitioned, and all $p$ threads are used.
The partitioning creates~$a_{\ell}$~subgraphs that all have to be partitioned using the $p$ threads.
The question is: How do we distribute the threads among the subgraphs, such that as few threads as possible are idle throughout the whole algorithm?
Note that we will consider threads active if we assign them to partition a graph.
In general, using all threads during graph partitioning is also not trivial, so in practice they may not be continuously active.

Here, we present four strategies to achieve efficient thread distribution.
The first one is straightforward and distributes all threads to one graph.
The second approach iteratively processes each layer of the hierarchy, i.e., first all islands, then all racks, and so forth.
The threads are distributed across all available subgraphs of the layer.
The third one uses a priority queue to collect available graphs and distribute the available threads.
The last approach is similar to the first one, but instead of globally managing graphs and threads, it will locally distribute a fraction of the threads on a fraction of available graphs.

\subsection{\textsc{Naive}.}\label{subsec:naive}
The \textsc{Naive} approach does not distribute threads across all available graphs.
Instead, it simply uses all $p$ threads to partition one graph at a time.
While this approach achieves an optimal distribution scheme, where no threads are idle, it suffers from the fact that graph partitioning does not scale optimally~\cite{Akhremtsev18, Gottesbueren20, Lasalle13}, especially if the subgraphs are small.

\subsection{\textsc{Layer}.}\label{subsec:layer}
The \textsc{Layer} approach processes each layer of the hierarchy in parallel, waiting for all subgraphs to be partitioned before advancing.
For example, the first layer only contains $G_\mathcal{C}$, so all available threads are used to partition the graph into $a_{\ell}$ blocks.
The second layer, corresponding to the islands, contains $a_{\ell}$ graphs that all need to be partitioned into $a_{\ell -1}$ blocks, and the third layer, corresponding to the racks, would then contain $a_{\ell} \cdot a_{\ell - 1}$ graphs that need to be partitioned.
Only after partitioning all graphs of one layer, the next one will be processed.
Since all graphs have roughly the same size, an equal thread distribution is desired.

Let $S = \{G_1, \ldots, G_m\}$ be the $m = a_{\ell} \cdot a_{\ell - 1} \cdot \ldots \cdot a_{\ell - i}$ graphs to be partitioned on layer $i + 1$.
For an equal thread distribution, each graph $G_j$ is assigned
\begin{equation}
    \label{eq:layer-distribution}
    p_j =
    \begin{cases}
        \lfloor \frac{p}{m} \rfloor + (j-1 < (p -\lfloor \frac{p}{m} \rfloor m), & \text{if } p \geq m \\
        1 , & \text{else }
    \end{cases}
\end{equation}
threads (with $<$ evaluating to 0 or 1).
If $p \geq m$ (more threads than graphs), each graph is assigned $\lfloor \frac{p}{m} \rfloor$ threads and the remaining $p - \lfloor \frac{p}{m} \rfloor m$ threads are distributed among the first graphs.
If more graphs need to be partitioned than threads are available~($p < m$) each graph is assigned exactly one thread.

\begin{algorithm}[t]
    \caption{The \textsc{Layer} algorithm}\label{alg:layer}
    \begin{algorithmic}[1]
        \Require Graph $G$, Hierarchy $H$=$a_1$:$\ldots$:$a_{\ell}$, Threads $p$
        \State $S \gets \{G\}$ and $N \gets \emptyset$
        \For{$i = 1$ \textbf{to} $\ell$}
            \For{$j = 1$ \textbf{to} $|S|$}
                \textbf{in parallel}
                \State $p_j \gets \text{Use Equation \ref{eq:layer-distribution}}$
                \State $\epsilon' \gets \text{Use Equation \ref{eq:balanced_epsilon}}$
                \State $T \gets \ $partition$(G_j, a_i, p_j, \epsilon')$
                \State $N \gets N \cup T$
            \EndFor
            \State $S \gets N$ and $N \gets \emptyset$
        \EndFor
        \State \textbf{Return} $S$
    \end{algorithmic}
\end{algorithm}

However, only $p$ threads can be active at any given time.
If~$p < m$, which is common at deeper levels, it is not feasible to start all~$m$ partitionings at once.
Doing so would oversubscribe the available hardware, leading to degraded performance.
To circumvent this problem, only the first $p$ partitionings are started, and additionally, an atomic index $i$ is used that points to the next graph in the set that needs to be partitioned.
When a thread finishes partitioning, it will read and increment the index $i$ in one atomic operation.
If the read index points to a valid graph ($i \leq m$), the thread will partition that graph.
Otherwise, no more graphs remain to be partitioned, and the thread is terminated.
It is crucial that read-and-increment operation is performed atomically, otherwise multiple threads could process the same graph.

Algorithm~\ref{alg:layer} shows pseudocode for this approach.
For readability, the atomic index $i$ and the corresponding operations are omitted (another approach will shortly show a similar technique).
The set $S$ contains all graphs in the current layer, while $N$ holds the resulting graphs of the next layer.
The outer loop (line 2) iterates across all hierarchical layers.
The inner loop (line 3) is executed in parallel, meaning that each loop iteration is handled by a different thread.
While the pseudocode shows $|S|$ iterations, in practice, only $\min(p, |S|)$ iterations are launched to prevent oversubscription.
The number of assigned threads $p_j$ is determined with \hbox{index~$j$} and $m = |S|$ in line 4.
In line 5 the adaptive imbalance $\epsilon'$ is computed, such that an overall \hbox{$\epsilon$-balanced} partition remains possible (see Section~\ref{sec:balanced-partition} for details).
The partitioning takes the subgraph~$G_j$, the number of \hbox{blocks $a_i$}, the number of \hbox{threads $p_j$} and the adaptive \hbox{imbalance $\epsilon'$} and returns a set~$T$ of $a_i$~subgraphs.
Since all graphs in $S$ are partitioned into $a_i$~blocks, it follows that $|N| = |S|a_i$.
The necessary space for $N$ can be pre-allocated, and all threads can insert their individual $T$ into $N$ without requiring synchronization.
Once all layers have been processed, the algorithm returns the set $S$ containing $k$ blocks.

The advantage of the \textsc{Layer} approach is its minimal synchronization overhead.
The number of threads assigned to each graph depends only on index $j$, and only if $m > p$ do the threads access the same atomic variable $i$.
However, the approach has an obvious drawback: if one partitioning takes much longer than any other, it will result in idle threads and wasted time.
The next two approaches both try to improve on this shortcoming.

\subsection{\textsc{Priority Queue}.}\label{subsec:priority-queue}
The \textsc{Priority Queue} approach is more flexible than the \textsc{Layer} approach, but it introduces more synchronization between the threads.
Instead of collecting the graphs of each layer in an orderly fashion, the graphs are pushed in a priority queue, which will be processed if threads are available.
The priority queue is ordered by the size of the graphs, such that the largest graph is always at the top of the queue.
Initially, $G_\mathcal{C}$ is partitioned into $a_{\ell}$~blocks using $p$~threads.
The resulting blocks are then inserted into the queue.
A~master-thread is used to spawn new partitioning tasks.
If graphs are in the queue and threads are available, the master-thread assigns some of the available threads to partition the top graph.
That graph is removed from the queue, and the master-thread spawns new partitioning tasks as needed.
A partitioning task consists of partitioning the assigned graph with the assigned number of threads, and additionally placing the blocks back in the queue.

\setlength{\textfloatsep}{5pt} %
\begin{algorithm}[t]
    \caption{The \textsc{Priority Queue} algorithm}\label{alg:priority-queue}
    \begin{algorithmic}[1]
        \Require Graph $G$, Hierarchy $H$, Threads $p$
        \State $Q \gets \{(G, a_{\ell})\}$ and $p_A \gets p$ and $S_{sol} \gets \emptyset$
        \While{not $(Q = \emptyset$ and $p_A = p)$}
            \While{$Q = \emptyset$ or $p_A = 0$}
                \textbf{wait}
            \EndWhile
            \State $p_t \gets \lceil \frac{p_A}{|Q|} \rceil$ \hfill \faLock
            \State $(G_t, a_t) \gets Q.\text{popTop()}$ \hfill \faLock
            \State $p_A \gets p_A - p_t$ \hfill \faLockOpen
            \Procedure{Spawn Thread}{}
                \State $\epsilon' \gets \text{use Equation \ref{eq:balanced_epsilon}}$
                \State $T \gets \ $partition$(G_t, a_t, p_t, \epsilon')$
                \If{$t = 1$}
                    \hfill \faLock
                    \State $S_{sol} \gets S_{sol} \cup T$ \hfill \faLock
                \Else
                    \hfill \faLock
                    \State $Q$.push($T$) \hfill \faLock
                \EndIf
                \State $p_A \gets p_A + p_t$ \hfill \faLockOpen
            \EndProcedure
        \EndWhile
        \State \textbf{Return} S
    \end{algorithmic}
\end{algorithm}

Algorithm~\ref{alg:priority-queue} shows the corresponding pseudocode.
The priority queue $Q$ is initialized with $(G, a_{\ell})$ (the graph and the number of partitions), the number of available threads $p_A$ is initialized to $p$ and the final set~$S$ is initially empty (line 1).
The while-loop \hbox{(line 2)} continues until $Q$ is empty and all threads are available, indicating that all $k$ blocks have been obtained.
In line~3, the master-thread waits until graphs are placed in the queue or threads become available.
When at least one graph and one thread are available, a partitioning task is generated.
The task uses $p_t$ threads and the first graph of the queue $G_t$.
The graph is removed from the queue and~$p_A$ is updated by subtracting $p_t$.
The symbols \faLock\ and \faLockOpen\ indicate that these operations are enclosed by a lock.
Before executing line 4, the lock must be obtained, and after line 6 the lock is released.
This ensures that modifications to $Q$ and $p_A$ are thread-safe.
Lines 7--14 describe the partitioning task, which is spawned by the master-thread.
The master-thread will continue at line 2, while the partitioning task is executed by the $p_t$ spawned threads.
The task includes partitioning the graph in line 9, placing the blocks back in the queue (line 13) and adding the $p_t$ threads back to the pool (line 14).
If $G_t$ belongs to the last layer of the hierarchy, the blocks are not placed in $Q$, but in the final set $S$.
The same lock surrounds lines 10--14 as lines 4--6, as both regions modify $Q$ and $p_A$.
Once the master-thread exits the while-loop at line 2, all $k$ blocks have been obtained, and the set $S$ is returned.

The master-thread always spawns only one partitioning task at a time, as other partitioning tasks could finish during scheduling and therefore add new graphs and free used threads.
This is especially important if a partitioning task works on a graph of the last layer.
The obtained blocks will be inserted into~$S$, instead of~$Q$, but the threads will be made available.
The master-thread can schedule the now freed threads on the remaining graphs in the queue.

Removal and insertion (lines 5 and 13) from and into the priority queue $Q$ are hard to realize in a non-blocking way.
Here we used a simple lock-based approach.
Each time a thread wants to modify the queue, it first has to acquire a lock.
If the lock is open, the thread closes it, modifies the queue and then unlocks it.
If the lock is closed, the thread will wait until that lock is opened by another thread.
This guarantees that only one thread is modifying the queue.
We also put all operations containing $p_A$ in the locked region.
In total, the same lock surrounds lines~\hbox{4--6} and lines~\hbox{10--14}.

This approach mitigates thread idleness as graphs are always pushed in the queue to be processed.
One partitioning taking too long will not stall all other threads from continuing working.
However, the approach has a much larger synchronization overhead as the access to the queue has to be managed.
Our approach via a lock can result in idle threads, as they wait for the lock to be released.

\subsection{\textsc{Non-Blocking Layer}.}\label{subsec:non-blocking-layer}
The \textsc{Non-Blocking Layer} approach is a compromise between both previous approaches.
It aims to minimize thread idleness, similar to \textsc{Priority Queue}, while maintaining a smaller synchronization overhead, similar to \textsc{Layer}.

In the \textsc{Layer} approach, we partition the graphs $G_1, \ldots, G_m$ with $p_1, \ldots, p_m$ threads.
The resulting blocks of all partitionings are collected and all spawned threads terminate.
However, if one partitioning takes too long, it prevents us from continuing to the next layer.
This can be avoided if the blocks are not collected globally, but only locally.
Each time the set of threads~$p_j$ finishes partitioning a graph, the resulting blocks are saved in a set only accessible to those~$p_j$ threads.
Once no more graphs need to be partitioned, the $p_j$~threads can process their blocks, while other threads may still be processing graphs in the current layer.
The $p_j$~threads will then be divided on to the blocks like in the \textsc{Layer} approach.

Algorithm~\ref{alg:layer-nb} presents the corresponding pseudocode.
It takes as input a set of graphs $S$ (initially $\{G\}$), an atomic index $i$ and the thread count $p$.
The local set $R$ stores blocks generated by partitioning with the $p$ threads.
The index $j$ points to a graph in $S$ and is set via an atomic fetch-and-add operation.
It is necessary since multiple other threads could be processing the set $S$ currently, which becomes clear shortly.
The while loop (lines \hbox{2--7}) processes graphs of $S$ and adds the resulting blocks to $R$.
If this is the last layer in the hierarchy, we will add all resulting sets to the final solution $S_{sol}$, free the $p$ threads and return.
The threads are added back to a global thread pool $p_A$, which is an atomic integer indicating how many threads are currently idle.
The first thread to access the thread pool in line 3 gets all currently available threads.
The last four lines set up threads to process the new set $R$.
In total $m = \min\{p, |R|\}$ new threads, holding $p_1, \ldots, p_m$ threads each, are spawned.
The distribution is done via equation~\ref{eq:layer-distribution}.
Each thread recursively calls the same function with parameters $(R, j, p_i)$, sharing the atomic index $j$.
Therefore, they will not process the same graphs of $R$ during their calls.

\setlength{\textfloatsep}{5pt} %
\begin{algorithm}[t]
    \caption{The \textsc{Non-Blocking Layer} algorithm}\label{alg:layer-nb}
    \begin{algorithmic}[1]
        \Require Graph-Set $S$, Atomic Index $i$, Threads $p$
        \State $R \gets \emptyset$ and $j \gets $ AtomicFetchAdd$(i, 1)$
        \While{$j < |S|$}
            \State $p \gets p\ + $ AtomicExchange$(p_A, 0)$
            \State $\epsilon' \gets \text{use Equation \ref{eq:balanced_epsilon}}$
            \State $T \gets $ partition$(S_j, a_i, p, \epsilon')$
            \State $R \gets R \cup T$
            \State $j \gets $ AtomicFetchAdd$(i, 1)$
        \EndWhile
        \If{$\text{isOnLastLayer()}$}
            \State $S_{sol} \gets S_{sol} \cup R$, AtomicAdd$(p_A, p)$, \textbf{Return}
        \EndIf
        \State Atomic index $j = 0$
        \For{$k = 0$ \textbf{to} $\min\{p, |R|\}$}
            \textbf{in parallel}
            \State $p_k \gets $ use equation~\ref{eq:layer-distribution}
            \State \textbf{Spawn thread} with recursive call $(R, j, p_k)$
        \EndFor
    \end{algorithmic}
\end{algorithm}

Each recursive call processes only a local part of the hierarchy, so a delay of a partitioning affects only that region.
By limiting the number of spawned threads to $\min\{p, |R|\}$ we prevent hardware oversubscription and releasing unneeded threads minimizes undersubscription.
Synchronization occurs for the atomic indices and also for the global thread pool.
The atomic indices produce only a small synchronization overhead, since few threads access each atomic index.
The synchronization overhead of the thread pool in contrast is higher, as it is globally accessible by all threads.
However, modern hardware optimizes atomic operations, making them generally faster than the locks used in the \hbox{\textsc{Priority Queue}} approach.

\subsection{Drawbacks.}\label{subsec:drawbacks}
For each approach (excluding \textsc{Naive}), it is possible to construct scenarios that lead to a suboptimal distribution scheme.
For example, the \textsc{Layer} approach suffers from the global collecting into the set $N$.
If all but one partitioning task finishes, the remaining threads remain idle.
The \textsc{Priority Queue} approach could schedule a large graph with one thread, right before a lot of other partitioning tasks finish.
This leads to (1) many idle threads and (2) slow partitioning of the large graph, as only one thread is processing it.
The \hbox{\textsc{Non-Blocking Layer}}, while addressing some drawbacks, shares a general downside that limits all presented approaches.

Consider $a_{\ell} = 2$, we partition $G_\mathcal{C}$ into $G_1$ and $G_2$.
Assume that $G_1$ is hard to partition while $G_2$ is easy to partition.
The varying difficulty can occur, since the number of edges can vary significantly between subgraphs.
All presented approaches use $p/2$ threads to partition $G_1$ and $G_2$.
Since $G_2$ is easy to partition, we will quickly obtain the final~$k/2$ partitions from $G_2$, however, $G_1$ might still be processed.
No matter which approach is used, the $p/2$ threads used for $G_2$ will remain idle until $G_1$ is partitioned.

In general, one partitioning that takes too long will stall the complete process, as no new subgraphs are generated and the idle threads cannot be utilized.
This could be improved by having a graph partitioner that could dynamically receive more threads.
Any idle thread could then be added to a currently running partitioning task.
However, no graph partitioning library supports such a feature.

    \section{Achieving a Balanced Partition}\label{sec:balanced-partition}
    While GPMP asks for a mapping with minimal communication cost, it also requires that the $k$-way partition is $\epsilon$-balanced.
Recall that a partition $V = V_1 \cup V_2 \cup \ldots \cup V_k$ is $\epsilon$-balanced if $c(V_i) \leq L_{\max} \coloneqq \left \lceil(1 + \epsilon)\frac{c(V)}{k}\right \rceil$.
This means that each partition is allowed to weigh $(1+\epsilon)$ times the average partition weight.
The constraint is necessary in the context of process mapping, as the partitions represent the workloads distributed to the processing elements of real hardware.
If no constraint was present, then $V_1 = V$ and $V_i = \emptyset$ would be a perfect solution with no communication cost.
However, only one PE would be active as all others remain idle, which would lead to overall worse performance.
Enforcing the $\epsilon$-balance ensures that the work is distributed equally and all PEs are equally utilized.

To ensure that the final $k$-way partition is \hbox{$\epsilon$-balanced}, it is necessary to adaptively rescale $\epsilon$ for each partitioning.
For example, consider a graph with 800 vertices, each with weight one, the hierarchy \hbox{$H=4:2$}, $k = 8$ and $\epsilon=0.1$.
We assume the worst case, where one block always maximizes its allowed weight.
When partitioning the first graph, we obtain a subgraph with a weight of $(1+0.1)800/2 = 440$.
In the next step, this subgraph is partitioned and one of its blocks will receive a weight of $(1+0.1) 440/4 = 121$.
However, this block exceeds the maximum allowed weight of \hbox{$L_{\max} = (1 + 0.1)800 / 8 = 110$}, and the resulting partition is not $\epsilon$-balanced.

It is necessary to adaptively adjust the imbalance parameter $\epsilon'$ based on the current subgraph weight and its depth in the hierarchy.
The way to rescale $\epsilon$ and the proof we present here are similar to the ones in~\cite{Schlag16}.
\begin{lemma}[Adaptive Imbalance]
    Let $G=(V,E)$ be the graph to be partitioned, $\epsilon$ the allowed imbalance and $k = \prod_{i=1}^{\ell} a_i$ the number of partitions, with the $a_i$\('\)s describing the hierarchy.
    Let $G'=(V',E')$ be the subgraph to be partitioned, $d$ the depth in the hierarchy (where the original graph $G_{\mathcal{C}}$ has depth $\ell$ and the final subgraphs in the hierarchy have depth 0) and $k' = a_1 \cdot \ldots \cdot a_d$ be the number of partitions for $G'$.
    Using
    \begin{equation}
        \label{eq:imbalance}
        \epsilon' \coloneqq \Big((1 + \epsilon) \frac{k'c(V)}{k c(V')}\Big)^{\frac{1}{d}} - 1
    \end{equation}
    as the adaptive imbalance parameter to partition $G'$ into $a_d$ partitions ensures that the final $k$-way partition of $G$ is $\epsilon$-balanced.
\end{lemma}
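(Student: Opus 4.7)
The plan is to proceed by induction on the depth $d$ of the subgraph in the hierarchy, using a slightly strengthened inductive hypothesis: every subgraph $G'$ processed at depth $d$ satisfies $c(V') \le (1+\epsilon)\frac{k'}{k}c(V)$ with $k' = a_1 \cdots a_d$. This hypothesis holds trivially for the starting graph $G_{\mathcal{C}}$ (at depth $\ell$, with $k' = k$ and $c(V') = c(V)$), and it serves two purposes: it guarantees that $\epsilon'$ computed from equation~(\ref{eq:imbalance}) is non-negative, and it has exactly the form needed to propagate the bound downward through the recursion.

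For the base case $d = 1$, the formula collapses to $(1+\epsilon') = (1+\epsilon)\frac{a_1 c(V)}{k c(V')}$, and partitioning $G'$ into $a_1$ blocks with imbalance $\epsilon'$ yields blocks of weight at most $(1+\epsilon')c(V')/a_1 = (1+\epsilon)c(V)/k = L_{\max}$, as required. For the inductive step, the formula gives $(1+\epsilon')^d = (1+\epsilon)\frac{k' c(V)}{k c(V')} \ge 1$ by the hypothesis. Partitioning $G'$ into $a_d$ blocks with imbalance $\epsilon'$ then produces subgraphs $G^*$ satisfying
\begin{equation*}
c(V^*) \le \frac{(1+\epsilon')c(V')}{a_d} = \frac{(1+\epsilon)k'c(V)}{k\, a_d (1+\epsilon')^{d-1}} \le (1+\epsilon)\frac{k''}{k}c(V),
\end{equation*}
where $k'' = k'/a_d = a_1 \cdots a_{d-1}$ and the last inequality uses $(1+\epsilon')^{d-1} \ge 1$. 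Hence every $G^*$, now at depth $d-1$, satisfies the strengthened hypothesis, and the inductive hypothesis yields the desired bound on all final blocks derived from $G^*$.

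The main obstacle is bookkeeping rather than analytic: the raw output of a single partitioning step, $c(V^*) \le (1+\epsilon')c(V')/a_d$, does not obviously fit the hypothesis required at depth $d-1$. The critical observation is that the $d$-th root in equation~(\ref{eq:imbalance}) is chosen precisely so that $(1+\epsilon')^d$ absorbs the full $(1+\epsilon)$ slack while leaving a residual factor $(1+\epsilon')^{d-1} \ge 1$ in the denominator, which is exactly what restores the hypothesis one level deeper. Once this telescoping is spotted, the induction closes cleanly, and applied to $G_{\mathcal{C}}$ at depth $\ell$ it yields the advertised global $\epsilon$-balance of the final $k$-way partition.
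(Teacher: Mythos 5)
Your proposal is correct, and its key computation is the same one the paper's outline rests on: the $d$-th root is chosen so that $(1+\epsilon')^d\,c(V')/k' \le L_{\max}$. What you do differently is organize this as a genuine induction with the explicit invariant $c(V') \le (1+\epsilon)\frac{k'}{k}c(V)$ at depth $d$, and this is more than cosmetic. The paper's outline bounds the final block weight by $(1+\epsilon')^d\frac{c(V_i)}{k'}$, which implicitly assumes the \emph{same} $\epsilon'$ is reused at all $d$ remaining levels; in the actual algorithm $\epsilon'$ is recomputed at every level from the weight of the subgraph actually produced, so the outline does not by itself establish the claim. Your inductive step handles exactly this: from $c(V^*) \le (1+\epsilon')c(V')/a_d$ and the identity $(1+\epsilon')^d = (1+\epsilon)\frac{k'c(V)}{kc(V')}$ you recover $c(V^*) \le (1+\epsilon)\frac{k''}{k}c(V)$ with $k'' = k'/a_d$, using only $(1+\epsilon')^{d-1}\ge 1$, which your invariant supplies (and which also guarantees $\epsilon' \ge 0$, so the partitioner is handed a legitimate imbalance). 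The invariant at the next level then licenses the next recomputed $\epsilon'$, and unwinding to depth $0$ gives $c \le (1+\epsilon)c(V)/k = L_{\max}$. In short, your argument buys a complete proof where the paper offers a worst-case heuristic sketch; the paper's version buys brevity and makes the origin of Equation~(\ref{eq:imbalance}) transparent as the solution of $(1+\epsilon')^d\frac{c(V_i)}{k'}\le L_{\max}$.
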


\begin{proof}
(Outline)
    In the end, none of the $k$ partitions should have a weight greater than \hbox{$L_{\max} = (1 + \epsilon)\frac{c(V)}{k}$}.
    Assume we use the original imbalance parameter $\epsilon' = \epsilon$ and that there is always one partition $V_{\max}$ that receives the maximum possible weight.
    One final block would then have a weight of
    \begin{equation}
        \label{eq:uncorrect-weight}
        c(V_{\max}) = \frac{(1+\epsilon')}{a_1} \cdots \frac{(1+\epsilon')c(V)}{a_\ell} = \frac{(1+\epsilon')^\ell c(V)}{k}.
    \end{equation}
    To ensure the balancing of this block, we have to choose $\epsilon'$ such that $c(V_{\max}) \leq L_{\max}$.
    When partitioning a block $V_i$ at depth $d$ into $k'$ blocks, we choose $\epsilon'$ as follows:
    \begin{equation}
        \label{eq:balanced_epsilon}
        (1 + \epsilon')^{d}\frac{c(V_i)}{k'} \leq L_{\max} \rightarrow \epsilon' \leq \Big((1 + \epsilon) \frac{k'c(V)}{k c(V')}\Big)^{\frac{1}{d}} - 1
    \end{equation}
    and thereby ensure a final $\epsilon$-balanced $k$-way partition.
\end{proof}

    \section{Experimental Evaluation}\label{sec:experiments}
    
\subsection{Methodology.}\label{subsec:methodology}
Our algorithm is implemented in C++ Standard 17, utilizing C++ Standard Threads for our parallel algorithms.
To facilitate graph partitioning, we rely on two state-of-the-art libraries: \textsc{KaFFPa}~\cite{Sanders11} of the \textsc{KaHIP} library (version 3.16) as the serial graph partitioner and \textsc{Mt-KaHyPar}~\cite{Gottesbueren20} (version 1.4) as the shared-memory graph partitioner.
Note that \textsc{Mt-KaHyPar} uses Intel's TBB library internally for parallelism.
Both libraries are also written in C++ and are integrated via their interface functions.
The code and the libraries are compiled using GCC version 14.1.0 with full optimization (-O3 flag).

The experiments are conducted on a machine that fits two Intel Xeon Gold 6230, each with 20 cores and 40 threads, for a total of 80 threads.
The CPU frequency is 2.1~Ghz.
The main memory was capped at 175~GB, and the machine runs Red Hat Enterprise~8.8.

\subsection{Benchmark Instances.}\label{subsec:benchmark-instances}
\begin{table}[t]
    \caption{Benchmark instance properties.}
    \begin{center}
        \begin{tabular}{lrr}
            \hline
            \multicolumn{1}{c}{Graph}          & \multicolumn{1}{c}{$|V|$}           & \multicolumn{1}{c}{$|E|$}           \\
            \hline
            \multicolumn{3}{c}{SuiteSparse Matrix Collection} \\
            \hline
            cop20k\_A      & 99 843         & 1 262 244       \\
            2cubes\_sphere & 101 492        & 772 886         \\
            thermomech\_TC & 102 158        & 304 700         \\
            cfd2           & 123 440        & 1 482 229       \\
            boneS01        & 127 224        & 3 293 964       \\
            Dubcova3       & 146 689        & 1 744 980       \\
            bmwcra\_1      & 148 770        & 5 247 616       \\
            G2\_circuit    & 150 102        & 288 286         \\
            shipsec5       & 179 860        & 4 966 618       \\
            cont-300       & 180 895        & 448 799         \\
            \hline
            \multicolumn{3}{c}{Walshaws' Benchmark Archive} \\
            \hline
            598a           & 110 971        & 741 934         \\
            fe\_ocean      & 143 437        & 409 593         \\
            144            & 144 649        & 1 074 393       \\
            wave           & 156 317        & 1 059 331       \\
            m14b           & 214 765        & 1 679 018       \\
            auto           & 448 695        & 3 314 611       \\
            \hline
            \multicolumn{3}{c}{Other Graphs} \\
            \hline
            af\_shell9     & $\approx504$K  & $\approx8.5$M   \\
            thermal2       & $\approx1.2$M  & $\approx3.7$M   \\
            nlr            & $\approx4.2$M  & $\approx12.5$M  \\
            deu            & $\approx4.4$M  & $\approx5.5$M   \\
            del23          & $\approx8.4$M  & $\approx25.2$M  \\
            rgg23          & $\approx8.4$M  & $\approx63.5$M  \\
            del24          & $\approx16.7$M & $\approx50.3$M  \\
            rgg24          & $\approx16.7$M & $\approx132.6$M \\
            eur            & $\approx18.0$M & $\approx22.2$M  \\
            \hline
        \end{tabular}
        \label{tab:benchmark-instances}
    \end{center}
\end{table}

The benchmark instances can be seen in Table~\ref{tab:benchmark-instances}.
We chose them as they are the same as in~\cite{Faraj20, Heuer23, Schulz17} and~\cite{Schulz19}, which makes for an easier comparison.
The instances come from various sources, with the small ones coming from the SuiteSparse Matrix Collection~\cite{Davis11} and most of the medium-sized graphs are taken from Chris Walshaw’s benchmark archive~\cite{Soper04}.
We also use graphs from the 10th DIMACS Implementation Challenge~\cite{Bader14} website.
The graphs \texttt{rgg23} and \texttt{rgg24} are random geometric graphs with $2^{23}$ and $2^{24}$ vertices, where each vertex represents a random point in the unit square and edges connect vertices whose Euclidean distance is below $0.55\sqrt{\ln{n} / n}$.
The graphs \texttt{del23} and \texttt{del24} are Delaunay triangulations of $2^{23}$ and $2^{24}$ random points in the unit square.
The graphs \texttt{af\_shell9}, \texttt{thermal2}, and \texttt{nlr} are from the matrix and the numeric section of the DIMACS benchmark set.
The graphs \texttt{eur} and \texttt{deu} are large road networks of Europe and Germany taken from~\cite{Delling09}.

\subsection{Results.}\label{subsec:results}
As the hierarchy we chose \hbox{$H = 4:8:\{1, \ldots, 6\}$} for all our experiments and for the distance \hbox{$D=1:10:100$}.
As the imbalance parameter, we choose an imbalance of 3\% i.e., $\epsilon = 0.03$.
We compare both \emph{communication cost} $J(C, D, \Pi)$ and \emph{running times} of our algorithms.
Each algorithm is run three times with different seeds, and both running times and $J(C, D, \Pi)$\('\)s are averaged.

Note that not all algorithms achieve an $\epsilon$-balanced partition for each configuration, graph, and seed.
For our algorithm $0.04\%$ of all instances (13 of 21\ 587) are imbalanced with a maximum imbalance of $4.1\%$.
Due to the negligible number of instances and the at most $1.1\%$ additional imbalance, we will not handle imbalanced partitions differently during the analysis.

\textbf{Performance Profiles.}
We use performance profiles~\cite{Dolan02} to compare the solution quality across the algorithms.
Let $\mathcal{A}$ be the set of all algorithms, $\mathcal{I}$ the set of all instances, $q_A(I)$ the quality of algorithm $A \in \mathcal{A}$ on instance $I \in \mathcal{I}$ and $\text{Best}(I) = \max_{A \in \mathcal{A}} q_A(I)$ the best solution on instance $I$.
For each algorithm $A$, a performance plot shows the fraction of instances ($y$-axis) for which $q_A(I) \leq \tau \cdot \text{Best}(I)$, where~$\tau$ is on the x-axis.
For~$\tau = 1$ the plot shows the fraction of instances, that each algorithm solved with the best solution.
Algorithms with greater fractions at small $\tau$ values (near the top-left corner) are better in terms of solution quality.

\begin{figure}[t]
    \centering
    \includegraphics[width=\columnwidth]{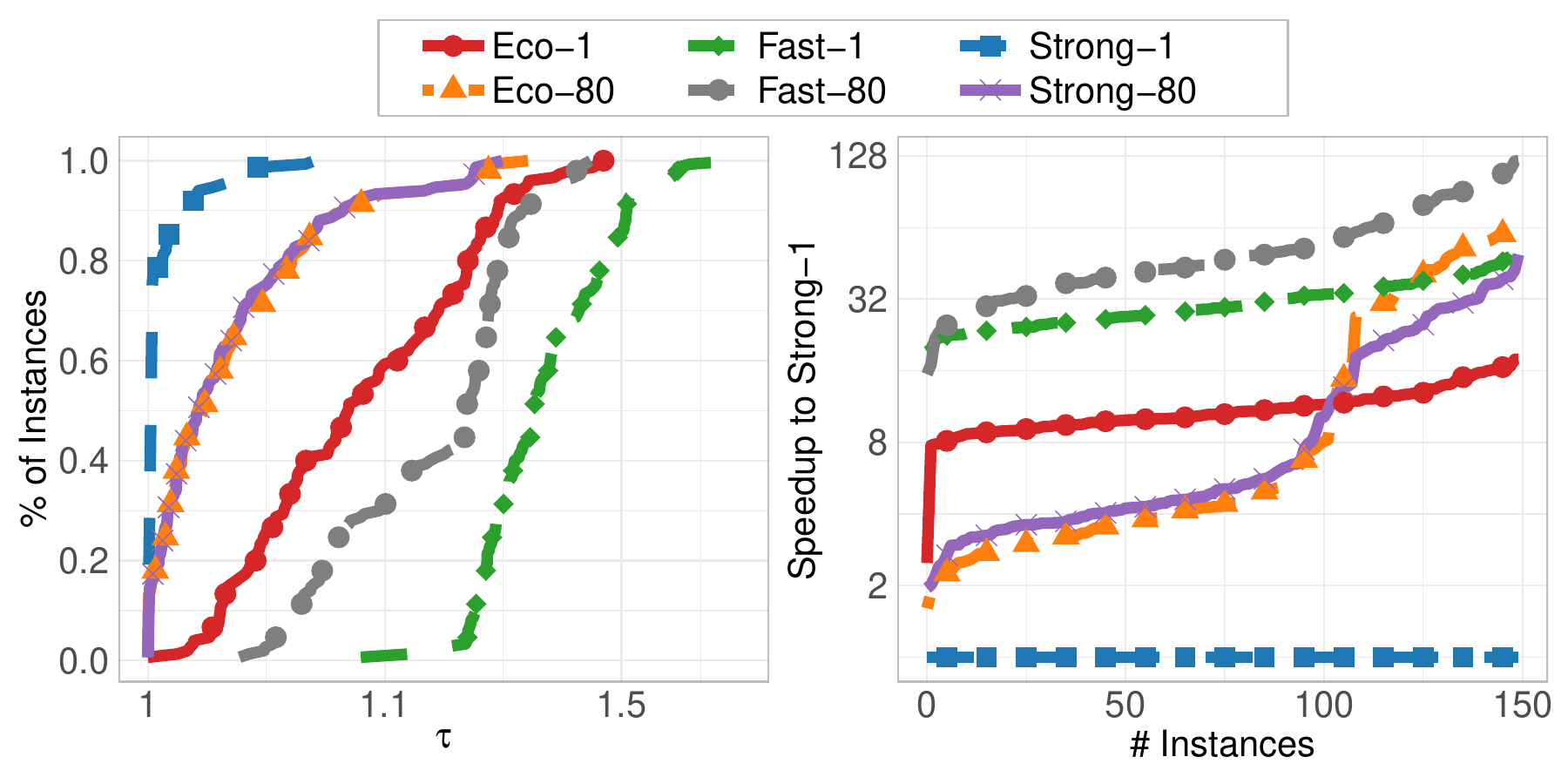}
    \caption{Solution quality (left) and speedup over \textsc{Strong-1} (right) for the 1 and 80 threaded \textsc{Non-Blocking Layer} \textsc{Fast}/\textsc{Eco}/\textsc{Strong} configurations.
    }
    \label{fig:FastEcoStrong}
\end{figure}

\textbf{Algorithm Configuration.}
Graph partitioning libraries typically provide a set of preconfigured hyperparameters that control the tradeoff between speed and quality.
In the case of \textsc{KaFFPa}, these are called \textsc{Fast}, \textsc{Eco} and \textsc{Strong}, and in the case of \textsc{Mt-KaHyPar} they are called \textsc{Default}, \textsc{Quality} and \textsc{Highest-Quality}.
Configurations with greater quality (i.e., \textsc{Strong} and \textsc{Highest-Quality}) usually enable more sophisticated local search methods that take longer, but can greatly improve the quality.

In our algorithm, it is possible to use a different configuration for each partitioning task.
Here, we settled for the simplest approach by specifying one serial configuration $A_{\text{ser}}$ and one parallel configuration $A_{\text{par}}$.
If at least two threads are assigned to a partitioning task then $A_{\text{par}}$ is used, otherwise $A_{\text{ser}}$ is used.
We recreate the \textsc{Fast}/\textsc{Eco}/\textsc{Strong} configurations by setting $A_{\text{ser}}$ to \textsc{Fast}/\textsc{Eco}/\textsc{Strong} and $A_{\text{par}}$ to \textsc{Default}/\textsc{Quality}/\textsc{Highest-Quality}.

Fig.~\ref{fig:FastEcoStrong} compares the solution quality and speedup using 1 and 80 threads.
As expected, \textsc{Strong} has a better solution quality than \textsc{Eco} and \textsc{Fast}, but is also slower.
\textsc{Strong} significantly outperforms \textsc{Eco} and \textsc{Fast} in the serial case, however, it is also up to 60 times slower than \textsc{Fast-1}.
When using 80 threads, the difference between \textsc{Strong} and \textsc{Eco} is not as substantial, neither in solution quality nor in speed.
\hbox{\textsc{Strong-1}} has significantly better solution quality than \hbox{\textsc{Strong-80}}, due to \textsc{KaFFPa-Strong} creating smaller edge-cut partitions than \textsc{Mt-KaHyPar-Highest-Quality}.
In comparison, \textsc{Mt-KaHyPar-Default} creates better partitions than \textsc{KaFFPa-Fast} leading to better solution quality for \textsc{Fast-80}.
However, \textsc{Fast-80} is not substantially faster than \textsc{Fast-1}.

\begin{figure}[t]
    \centering
    \includegraphics[width=\columnwidth]{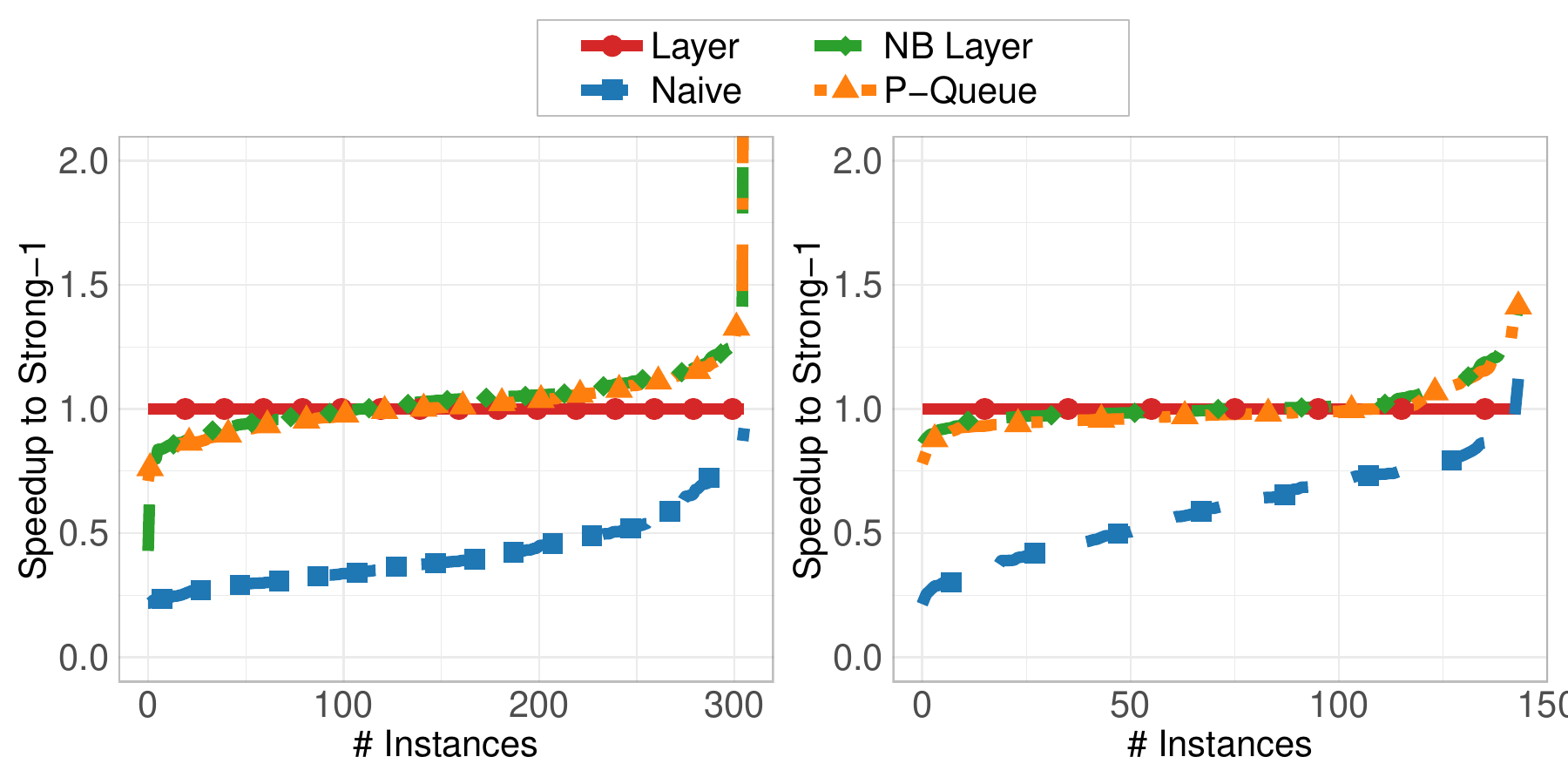}
    \caption{Runtime comparison for 80 threads between \textsc{Naive}, \textsc{Layer}, \textsc{Queue} and \textsc{Non-Blocking Layer} on small graphs (left) and large graphs (right).
    }
    \label{fig:approach}
\end{figure}

\textbf{Thread Distribution Approaches.}
Next, we compare the four thread distribution approaches presented in Section~\ref{sec:parallel-hierarchical-multisection}.
Fig.~\ref{fig:approach} shows the speedup of each approach compared to \textsc{Layer} when using 80 threads.
The left plot only includes instances with small graphs (less than one million vertices), while the right plot includes instances with large graphs (more than one million vertices).
\textsc{Naive} performs the worst, regardless of graph size.
There is no instance where it is the fastest approach.
For small graphs \textsc{Non-Blocking Layer} had the fastest running time on roughly ~49\% of instances, followed by \textsc{Layer} with ~28\% and \textsc{Priority Queue} with ~23\% of instances.
The geometric mean speedup for \textsc{Non-Blocking Layer} and \textsc{Priority Queue} over \textsc{Layer} on small graphs is $1.02$ and $1$ respectively.
\textsc{Layer} is the fastest approach most often with ~47\% of instances for larger graphs, followed by \textsc{Non-Blocking Layer} with ~42\% and \textsc{Priority Queue} with ~11\% of instances.
For large graphs, the geometric mean speedup is $1.01$ for \textsc{Non-Blocking Layer} and $0.99$ for \textsc{Priority Queue}.

In total, distributing the threads across the graphs is better than the \textsc{Naive} approach, however, which of the three methods is used has, on average, only a minimal effect on the runtime.

\begin{figure}[t]
    \centering
    \includegraphics[width=\columnwidth]{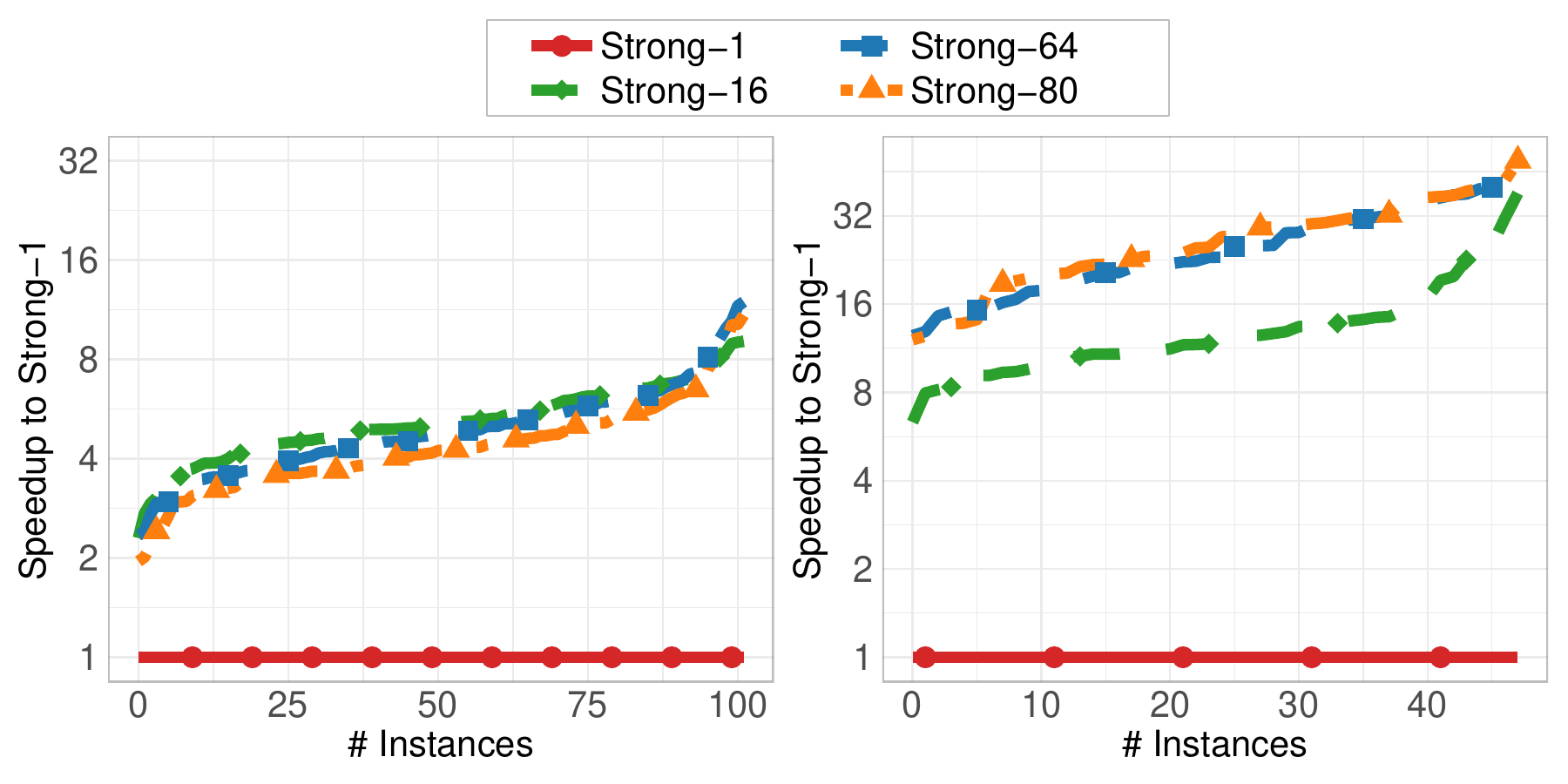}
    \caption{Comparing scalability of \textsc{Non-Blocking Layer} with the \textsc{Strong} configuration on small graphs (left) and large graphs (right).
    }
    \label{fig:scalability}
\end{figure}

\textbf{Scalability.}
Fig.~\ref{fig:scalability} shows the difference in speed when using 1, 16, 64 and 80 threads for the \textsc{Non-Blocking Layer} \textsc{Strong} configuration.
The left plot includes small graphs, while the right focuses on large graphs.
For the small graphs, the 16 threaded version can achieve a speedup of about 9.
The 64 threaded version and 80 threaded version achieve a maximum speedup of 11.9 and 10.8 respectively.
However, their geometric mean speedups are 5.1, 4.8 and 4.3.
The decrease in performance can be attributed to the small graphs, which are hard to efficiently partition in parallel.
For the large graphs, the 16, 64 and 80 threaded versions achieve a maximum speedup of 38.1, 49.1 and 49.2.
While using 64 threads instead of 16 has a notable impact on running time, the difference between using 64 and 80 threads is most often negligible.
The 16 threaded version achieves speedups of greater than 16 and therefore has superlinear speedup.
This is only possible since we compare to \textsc{Strong-1}, which exclusively uses \textsc{KaFFPa-Strong}.
The multithreaded versions also use \hbox{\textsc{Mt-KaHyPar-Highest-Quality}}, which is faster but has worse solution quality (see Fig.~\ref{fig:FastEcoStrong}).
Therefore, the speedups can not only be attributed to using more threads, but also to the use of another library when more threads are available.

\subsection{Comparison to SOTA.}\label{subsec:comparison-to-SOTA}
In this section, we compare our best algorithms to current state-of-the-art parallel, but also serial methods.
To better distinguish between the algorithms, we call our algorithm \textsc{SharedMap}, which will always use the \textsc{Non-Blocking Layer} approach.
Configurations \textsc{Strong} and \textsc{Fast} are abbreviated by \textsc{S} and \textsc{F} respectively.

\begin{figure}[t]
    \centering
    \includegraphics[width=\columnwidth]{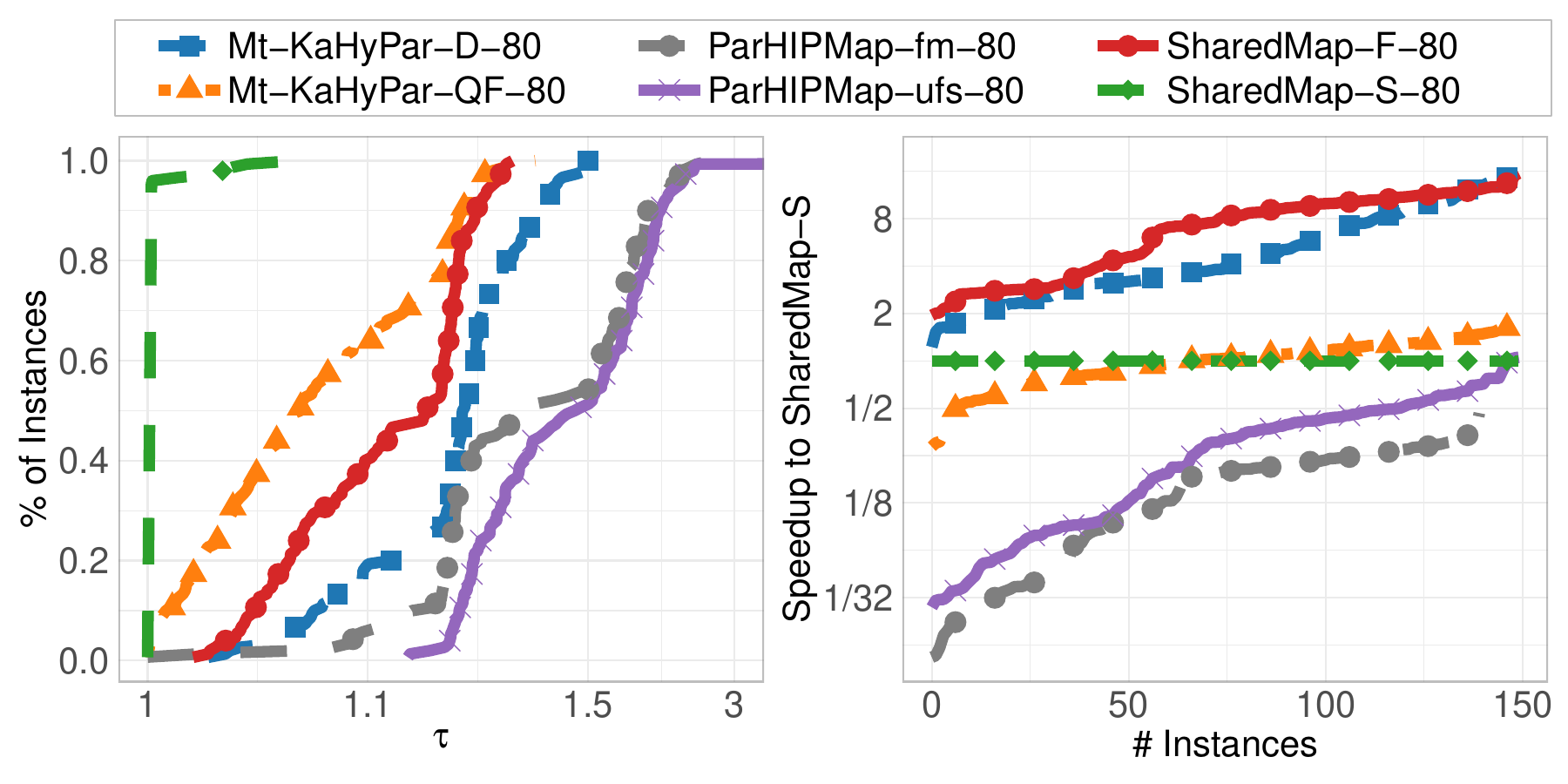}
    \caption{Solution quality (left) and speedup over \textsc{SharedMap-S} for the various parallel implementations. All algorithms are run with 80 threads.}
    \label{fig:parallel}
\end{figure}

\textbf{Parallel.}
We compare to \textsc{Mt-KaHyPar-Steiner}~\cite{Heuer23} as a parallel shared-memory algorithm and against \textsc{ParHIPMap}~\cite{Predari21} as a parallel distributed-memory algorithm.
For \textsc{Mt-KaHyPar-Steiner}, we use version 1.3\footnote{Version 1.4 gave inconsistent results.} and compare against its presets \textsc{Default} \hbox{(\textsc{Mt-KaHyPar-D})} and \textsc{Quality Flows} \hbox{(\textsc{Mt-KaHyPar-QF})}, which are the fastest and strongest configuration respectively.
For \textsc{ParHIPMap}, we evaluate the configurations \textsc{ultrafastsocial} (\textsc{ParHIPMap-ufs}) as the fastest configuration and \textsc{fastmesh} (\textsc{ParHIPMap-fm}) as the strongest configuration.
All algorithms utilize the 80 available threads.
Fig.~\ref{fig:parallel} shows the performance plot and speedup over \textsc{SharedMap-S}.
\textsc{SharedMap-S} significantly outperforms the other parallel algorithms in regard to solution quality.
It offers the best solution on 95\% on instances, while the next strongest algorithm, \textsc{Mt-KaHyPar-QF}, has the best solution on $5\%$ of instances.
While \hbox{\textsc{Mt-KaHyPar-QF}} can be up to $1.9$ times faster than \textsc{SharedMap-S}, its geometric mean speedup is slower at approximately $0.96$.
\textsc{SharedMap-F} has better solution quality than \textsc{Mt-KaHyPar-D} and additionally is also faster, with a geometric mean speedup of $6.4$ to $4.5$ over \textsc{SharedMap-S}.
The \textsc{ParHIPMap} algorithms can neither compete in speed nor solution quality.

Overall, \textsc{SharedMap-S} and \textsc{SharedMap-F} have better solution quality and are faster compared to the corresponding counter-parts of \textsc{Mt-KaHyPar}.

\textbf{Serial.}
In this section, all algorithms run serially, meaning that each algorithm only uses one thread.
Although this work focuses on a multithreaded algorithm, we also want to compare our algorithms to the three serial methods \hbox{\textsc{KaFFPa-Map}~\cite{Schulz17}}, \textsc{Global multisection}~\cite{Schulz19} (\textsc{GM}) and \hbox{\textsc{Integrated Mapping}~\cite{Faraj20}} (\textsc{IM}) of the \textsc{KaHIP} library.
We compare to their \textsc{Strong} and \textsc{Fast} configurations.
\textsc{SharedMap} will therefore only utilize the \textsc{KaFFPa} partitionings of $A_{\text{ser}}$ and the thread distribution strategy is irrelevant.
Note that \textsc{Jostle}~\cite{Karypis98} and \textsc{Scotch}~\cite{Pellegrini96} also offer methods to solve GPMP, however,~\cite{Faraj20} and~\cite{Schulz19} report that the mapping algorithms in \textsc{KaHIP} outperform these methods.
Also, a comparison with \textsc{LibTopoMap}~\cite{Hoefler11} is omitted since \textsc{GM}~\cite{Schulz19} outperforms the algorithm.

\begin{figure}[t]
    \centering
    \includegraphics[width=\columnwidth]{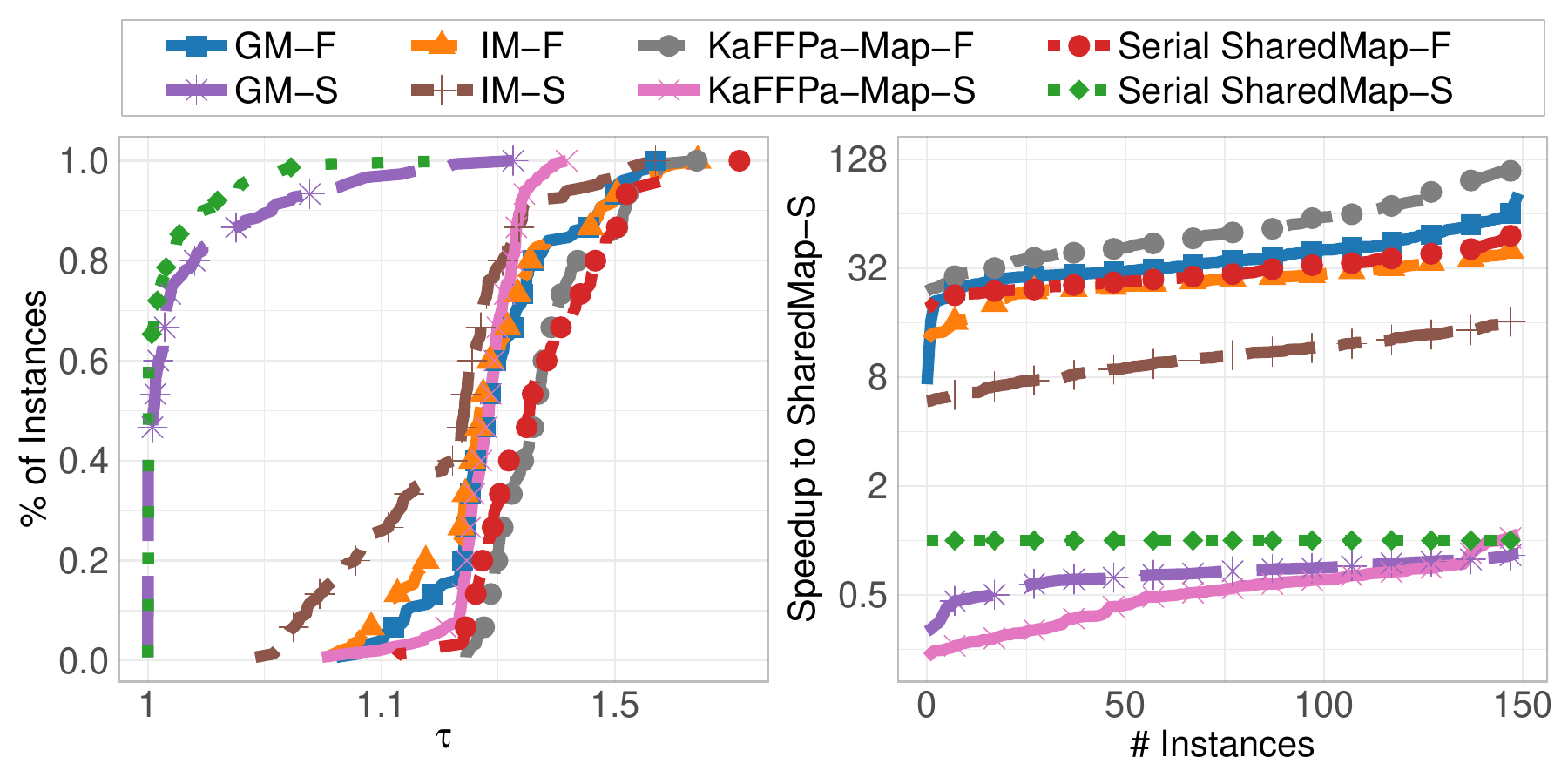}
    \caption{Solution quality (left) and speedup over serial \textsc{SharedMap-S} (right) for the various other serial implementations.}
    \label{fig:serial}
\end{figure}

Fig.~\ref{fig:serial} shows the performance plot and speedup compared to \textsc{SharedMap-S}.
\textsc{SharedMap-S} ($60\%$ best solutions) and \textsc{GM-S} ($40\%$ best solutions) are the best in terms of solution quality.
\textsc{SharedMap-S} is faster than \textsc{GM-S}, which has a geometric mean speedup of $0.64$.
Both \textsc{SharedMap-S} and \textsc{GM-S} use hierarchical multisection and \textsc{Strong} graph partitioning from the \textsc{KaHIP} library, which raises the question of why our implementation is stronger and faster.
To our knowledge, \textsc{GM} does not use the adaptive imbalance described in Section~\ref{sec:balanced-partition}.
This could enable our algorithm to find better partitions.
\textsc{GM-S} worse performance could be explained by the additional refinement that takes place.
They use a local search that swaps the assignment of two tasks if it yields a better mapping.
Our algorithm does not employ any further local search strategies.
\textsc{KaFFPa-Map-S} and \textsc{IM-S} cannot compete in terms of solution quality. %
\hbox{\textsc{SharedMap-F}} is one of the slowest out of all \textsc{Fast} configurations and also has one of the worst solution quality.
\hbox{\textsc{KaFFPa-Map-F}} has equally bad solutions, but it is the fastest of all algorithms.

In summary, serial \hbox{\textsc{SharedMap-S}} improves state-of-the-art results, while also being faster than the previously strongest algorithm.

    \section{Conclusion}\label{sec:conclusion}
    Process mapping is the problem of mapping vertices of a task graph onto the processing elements of a supercomputer, such that the workload is equally distributed and communication cost is minimized.
In this work, we introduced a shared-memory hierarchical multisection algorithm that partitions the task graph alongside a homogeneous hardware hierarchy.
This strategy naturally creates multiple independent partitioning problems, enabling parallelization.
We described four different approaches for assigning threads to the partitionings, such that as few threads as possible are idle at all times.

Our algorithm \textsc{SharedMap} significantly outperforms current state-of-the-art parallel shared-memory algorithms in terms of solution quality and speed.
It has the best solution quality on $95\%$ of instances and is also faster than the previous best shared-memory algorithm.
In the serial case, \textsc{SharedMap-S} has the best quality and is faster than the previous state-of-the-art.

In the future, we want to explore hierarchical multisection on graphics processing units.
Another promising approach is to incorporate the mapping into the partitioning, as in~\cite{Faraj20}.
A shared-memory algorithm would be closely related to multithreaded graph partitioning algorithms and could significantly improve running times.
This would also include a multithreaded refinement phase, which could also be used to further improve the solution quality of \textsc{SharedMap}.

Although optimizing $J(C, D, \Pi)$ offers improvement in practice~\cite{Brandfass12}, it remains a relatively simple metric.
In the future, we aim to refine the model to more accurately represent real supercomputer architectures.

Modern large-scale scientific software is no longer static, so updates to the task graph can occur.
These changes require adaptive mapping strategies to maintain optimal efficiency.

    \bibliographystyle{plainurl}
    \balance
    \bibliography{bibliography}
\end{document}